\let\openbox\@undefined
\theoremstyle{plain}
\theoremstyle{remark}
\renewcommand{\vec}[1]{\mathbf{#1}}
\newcommand{\trans}{\mathsf{T}}
\providecommand{\remarkname}{Remark}
\providecommand{\theoremname}{Theorem}
\begin{document}

\title{An operator-based approach to topological photonics}
% Local topology in photonics directly from Maxwell's equations
% A local approach to topology in photonic structures
% An operator-based approach to topological photonics   <--- I think I like this one.

\author{Alexander Cerjan}
\email{awcerja@sandia.gov}
\affiliation{Center for Integrated Nanotechnologies, Sandia National Laboratories, Albuquerque, New Mexico 87185, USA}

\author{Terry A.\ Loring}
\email{loring@math.unm.edu}
\affiliation{Department of Mathematics and Statistics, University of New Mexico, Albuquerque, New Mexico 87131, USA}

\date{\today}

\begin{abstract}
Recently, the study of topological structures in photonics has garnered significant interest, as these systems can realize robust, non-reciprocal chiral edge states and cavity-like confined states that have applications in both linear and non-linear devices. However, current band theoretic approaches to understanding topology in photonic systems yield fundamental limitations on the classes of structures that can be studied. Here, we develop a theoretical framework for assessing a photonic structure's topology directly from its effective Hamiltonian and position operators, as expressed in real space, and without the need to calculate the system's Bloch eigenstates or band structure. Using this framework, we show that non-trivial topology, and associated boundary-localized chiral resonances, can manifest in photonic crystals with broken time-reversal symmetry that lack a complete band gap, a result which may have implications for new topological laser designs. Finally, we use our operator-based framework to develop a novel class of invariants for topology stemming from a system's crystalline symmetries, which allows for the prediction of robust localized states for creating waveguides and cavities.
\end{abstract}

\maketitle

\section{Introduction}

In recent years, the incorporation of concepts from topological physics into photonic structures has generated significant interest \cite{lu_topological_2014,khanikaev_two-dimensional_2017,ozawa_topological_2019,xie_higher-order_2021}, as such systems can realize robust, localized states for enhancing light-matter interactions \cite{ota_active_2020} and routing quantum information \cite{lodahl_chiral_2017}. For example, topological lasers \cite{bahari_nonreciprocal_2017,st-jean_lasing_2017,bandres_topological_2018,zeng_electrically_2020,yang_spin-momentum-locked_2020,shao_high-performance_2020,bahari_photonic_2021,dikopoltsev_topological_2021,yang_topological-cavity_2022} can exhibit efficient phase locking and increased robustness against disorder in comparison to their conventional counterparts \cite{harari_topological_2018,amelio_theory_2020,zapletal_long-lived_2020}, while many topological photonic systems, such as those relying on the photonic analogues to the valley-Hall and quantum spin Hall effects, can be used to create, direct, and protect quantum states \cite{rechtsman_topological_2016,barik_topological_2018,mittal_topological_2018,barik_chiral_2020,parappurath_direct_2020,arora_direct_2021,dai_topologically_2022,hauff_chiral_2022}. However, these developments have also exposed some of the current fundamental challenges facing the field of topological photonics. First, it has proven difficult to realize nanoscale photonic Chern insulators, and their highly desirable non-reciprocal chiral edge states, using magneto-optic materials; known materials have relatively weak magneto-optic responses at technologically relevant wavelengths \cite{espinola_magneto-optical_2004,bi_-chip_2011}, which makes it hard to use this response to open a complete topological band gap in a photonic crystal. While there are many photonic Chern insulator designs that circumvent this problem by using alternate methods for breaking (or effectively breaking) time-reversal symmetry \cite{wang_observation_2009,rechtsman_photonic_2013,hafezi_imaging_2013,khanikaev_photonic_2013,wu_scheme_2015,wimmer_experimental_2017,zilberberg_photonic_2018,klembt_exciton-polariton_2018,lustig_photonic_2019,fang_anomalous_2019,liu_generation_2020,dutt_single_2020,lu_floquet_2021}, these methods incur costs in increased system size or experimental complexity. Second, even in topological photonic systems that preserve time-reversal symmetry, such as photonic topological crystalline insulators \cite{blanco-redondo_topological_2016,noh2018,peterson2018,blanco-redondo_topological_2018,chen_direct_2019,xie_visualization_2019,mittal_photonic_2019,ota_photonic_2019,smirnova_third-harmonic_2019,cerjan2020_HOT_BIC_exp,kim_multipolar_2020,kruk_nonlinear_2021}, we do not have a general theory for how to treat their topological properties in finite systems. Indeed, many potential applications of the cavity-like states found in higher-order topological photonic systems would benefit from coupling these effective cavities together. However, at present, finite-size effects in these structures must be understood using bespoke analyses of individual systems. Finally, theories of topology for non-linear photonic systems \cite{lumer_self-localized_2013,ablowitz_linear_2014,leykam_edge_2016,zhou_optical_2017,mukherjee_observation_2020,maczewsky_nonlinearity-induced_2020,leykam_probing_2021,mittal_topological_2021,jurgensen_quantized_2021,jurgensen_chern_2022,fu_nonlinear_2022,maluckov_nonlinear_2022} are also tailored to specific system architectures and are difficult to generalize; as the non-linearity generally breaks these systems' crystal symmetries, band theoretic approaches cannot be directly applied without substantial alterations.

Altogether, these fundamental challenges in the field of topological photonics can be abstracted to limitations of band theories of topology: Incorporating finite-size effects and non-linearities requires significant augmentation of a band theory, possibly to use a large supercell, while relying on a bulk band gap to be the measure of a system's topological protection forces these theories to only search for topology in insulators and semi-metals. Instead, a theory of topological photonics that is independent of band theory would potentially provide a path towards solving or circumventing this myriad of challenges currently facing the field.

Here, we develop a theoretical framework for determining a photonic structure's topology from its real-space operator description (i.e., Maxwell's equations), without the need to calculate its band structure or Bloch eigenstates. Instead, our framework is based on the photonic system's \textit{spectral localizer}, which yields a set of local invariants, protected by local gaps, for every symmetry class. Using this framework, we predict that non-trivial topology, and associated boundary-localized resonances, can appear in photonic crystals that lack complete band gaps; a discovery that may have implications for the development of photonic Chern devices at optical and near-infrared wavelengths where it is difficult to find materials that enable sufficiently strong time-reversal symmetry breaking to open complete topological photonic band gaps. Finally, we show how to develop local invariants for topology stemming from a system's crystalline symmetries and we provide an example of such an invariant for inversion symmetric photonic systems. This previously unknown class of invariants for topological crystalline structures allows for the prediction of spatially-localized topological states without the construction of Wannier centers or the calculation symmetry indicator invariants, and we anticipate that these local crystalline invariants will be useful in the development and optimization of new topologically robust photonic waveguides and cavities for enhancing light-matter interactions and routing quantum information.

\section{Theoretical Framework}

\subsection{Overview of the spectral localizer}

Over the last few years, the spectral localizer has emerged as a method for determining a finite lattice's topology directly from its real-space description using developments from the study of operator theory and $C^*$-algebras \cite{loringPseuspectra,LoringSchuBa_even,Doll_Sculz-B_skew_localizer}. There are two important conceptual shifts for defining the topology of finite lattices that distinguish such theories from traditional band theoretic approaches.
First, the system's topology can be defined locally, not globally; thus, these theories can be immediately applied to disordered and aperiodic structures without alteration, and can discriminate between different topological domains within a system. Second, as the lattice is finite (with open boundaries), it does not possess a band structure; thus, the measure of the system's topological protection determined using its real-space description must also be local. Here, it may seem intuitive to try to replace an infinitely periodic lattice's bulk band gap with a measure of protection based on a gap in the full spectrum of the finite lattice --- however, this cannot work, as boundary-localized states (of either topological or trivial origin) can close or obscure this gap, and attempting to remove these states from the full spectrum results in an inherently local measure of the protection.

At its core, the spectral localizer takes an ``operator-based'' perspective of a material's topology, in contrast to the ``eigenstate-based'' perspective of traditional definitions of topology (in which the system's eigenstates are used to calculate invariants, such as the Chern number \cite{klitzing_new_1980,thouless_quantized_1982,haldane_model_1988,haldane_possible_2008,raghu_analogs_2008}, Zak phase \cite{zak_berrys_1989}, or symmetry indicators \cite{benalcazar2017quad,benalcazar2017quadPRB,benalcazar_quantization_2019}). Nevertheless, the equivalence between the operator and eigenstate approaches can be understood by analyzing the properties of atomic limits. In an atomic limit, a system possesses a complete basis of spatially localized Wannier functions; in a crystal, these states form a flat band \cite{kitaev2009}. As such, from a real-space perspective, an atomic limit's Wannier states have both a well defined position and energy, which means that an atomic limit's Hamiltonian, $H^{(\textrm{AL})}$, commutes with its position operators, $X_j^{(\textrm{AL})}$, $[H^{(\textrm{AL})},X_j^{(\textrm{AL})}] = 0$. For systems that are not in an atomic limit, the spectral localizer establishes a system's topology by determining whether the system's Hamiltonian, $H$, and position operators, $X_j$, centered at some choice in position-energy space, $(\vec{x},E)$, can be continued to commuting without breaking any necessary symmetry or closing the local gap (i.e., does a continuous path of matrices $\{ X_j^{(\tau)},H^{(\tau)}\}$ with $0 \le \tau \le 1$ exist, where at every $\tau$ the necessary symmetries are preserved and the local gap is open, and in which $\tau = 0$ corresponds to the matrices of the original system centered at $(\vec{x},E)$ and $\tau = 1$ are those of the atomic limit). Any obstruction to this continuation yields a non-trivial local invariant and indicates that the system is topological. Overall, the spectral localizer's perspective on topology can be viewed as the real-space analogue to the perspective of topological quantum chemistry \cite{kruthoff_topological_2017,bradlyn_topological_2017,po_symmetry-based_2017,cano_building_2018,watanabe_space_2018,christensen_location_2022}, which uses a system's eigenstates and band structure to make a similar assessment of whether a set of bands below some chosen energy can be continued to an atomic limit, and where, again, any obstruction to this continuation is a manifestation of non-trivial topology.

To diagnose whether a finite, $d$-dimensional system at given position, $(x_1,\cdots,x_d)$, and energy, $E$, can be continued to an atomic limit, the system's operators are first shifted to be centered at that location, $X_j \rightarrow X_j - x_jI$ for $j \in 1,\ldots,d$ and $H \rightarrow H- EI$, where $I$ is the identity matrix. Then, to ascertain whether $H-EI$ and $X_j -x_jI$ can be continued to commuting, the spectral localizer combines these operators together using a non-trivial Clifford representation,
\begin{multline}
    L_{\boldsymbol{\lambda} = (x_1,\cdots,x_d,E)}(X_1,\cdots,X_{d},H)= \\
    \sum_{j=1}^{d} \kappa (X_{j}-x_{j} I)\otimes\Gamma_{j} + (H-E I)\otimes\Gamma_{d+1}.  \label{eq:loc}
\end{multline}
Here, $\Gamma_j^\dagger = \Gamma_j$, $\Gamma_j^2 = I$, and $\Gamma_j \Gamma_l = -\Gamma_l \Gamma_j$ for $j \ne l$, and $\kappa$ is a scaling parameter that ensures $H$ and $X_j$ have compatible units. Rigorously, one can prove that various properties of the spectral localizer, Eq.\ (\ref{eq:loc}), can be used to identify whether the set of matrices $\{(X_j-x_j I),(H - EI)\}$ has an obstruction that prohibits them from be continued to commuting (while preserving the necessary symmetries and local gap) for every symmetry class in every dimension that has the possibility to exhibit non-trivial topology \cite{loringPseuspectra}. However, just as different symmetry classes have different invariants in topological band theory, distinct properties of the spectral localizer are used for each symmetry class.

For example, in $2$ dimensions, the spectral localizer's invariant that determines whether the set of matrices $\{(X-x I),(Y-y I),(H - EI)\}$ can be continued to commuting while preserving their Hermiticity is
\begin{align}
    C_\textrm{L}(x,y,E) &= \tfrac{1}{2}\textrm{sig}\left( L_{(x,y,E)}(X,Y,H) \right) \in \mathbb{Z}, \label{eq:C}
\end{align}
in which $\textrm{sig}(L_{\boldsymbol{\lambda}})$ is a matrix's signature, its number of positive eigenvalues minus its number of negative eigenvalues; thus, $C_\textrm{L}(x,y,E)$ is guaranteed to be an integer. If $C_\textrm{L}(x,y,E) = 0$, the system is locally trivial and can be continued to an atomic limit. As this invariant does not take into account any other system symmetries (i.e., the system being described is in class A), $C_\textrm{L}(x,y,E)$ is a local Chern number. As a second example, in a $1$ dimensional system with particle-hole symmetry, $\mathcal{P}^2 = 1$ (i.e., class D), the invariant that identifies whether the set of matrices $\{(X-x I),H\}$ can be continued to commuting while preserving both particle-hole symmetry and their Hermiticity is
\begin{align}
    \tilde{\nu}_{\textrm{L}}\left(x,0\right) & = \textrm{sign}\left(\textrm{det}\left[ \left(\begin{array}{cc} 0 & I \end{array}\right)
    L_{\left(x,0\right)}(X,H)
    \left(\begin{array}{c} I \\ 0 \end{array}\right)\right]\right) \notag \\
    &= \textrm{sign}(\textrm{det}[(X-xI) + i H]) \in \{-1, 1 \} = \mathbb{Z}_2, \label{eq:D}
\end{align}
Here, the invariant is only defined at $E=0$, as particle-hole symmetry can only protect states at that energy, and the system is locally trivial if $\tilde{\nu}_{\textrm{L}}\left(x,0\right) = 1$. (Note, for class D, there is always a basis in which $H$ is purely imaginary and $X$ is real, so the determinant in Eq.\ (\ref{eq:D}) is guaranteed to be real.) 

For all symmetry classes, and in all dimensions, the local gap that Eq.\ (\ref{eq:loc}) preserves through the continuation process is 
\begin{equation}
    \mu_{\boldsymbol{\lambda}}^{\textrm{C}}(X_1,\cdots,X_{d},H) = \textrm{min}(|\sigma(L_{\boldsymbol{\lambda}}(X_1,\cdots,X_{d},H))|), \label{eq:muC}
\end{equation}
i.e., the absolute value of the eigenvalue of $L_{\boldsymbol{\lambda}}$ that is closest to zero. Here, $\sigma(L_{\boldsymbol{\lambda}})$ is the spectrum of $L_{\boldsymbol{\lambda}}$, and the superscript $\textrm{C}$ stands for Clifford, as this indicator function is related to the system's Clifford pseudospectrum \cite{cerjan_quadratic_2022}. None of the invariants that the spectral localizer uses to identify topology can be changed without $\mu_{\boldsymbol{\lambda}}^{\textrm{C}} \rightarrow 0$, as they are all continuous functions of invertible matrices that have the correct mathematical properties (e.g.\ the sign of the determinant is continuous on the set of invertible real matrices). Altogether, there are at least two ways that $\mu_{\boldsymbol{\lambda}}^{\textrm{C}}$ can close so that the topological invariant can change: by either changing one's choice of $\boldsymbol{\lambda}$, or by adding perturbations to the underlying operators, $X_j$ and $H$.

Note that the choice in position-energy space for where to evaluate the spectral localizer, $\boldsymbol{\lambda} = (x_1,\cdots,x_d,E) \in \mathbb{R}^{d+1}$, need not exist within the lattice's spatial or spectral extent. In other words, both the spectral localizer's invariants and local gap can be evaluated anywhere, and for any energy, regardless of the size of the finite system under consideration. This freedom of choice is analogous to the freedom in representation theory--based approaches to choose any number of bands to assess whether they are Wannierizable (i.e., whether that set of bands can be continued to the atomic limit). Just as adding or removing a band from a given set of bands can change their Wannerizability (e.g., different band gaps can have different topology), changing the choice of $\boldsymbol{\lambda}$ where the spectral localizer is evaluated can also affect whether the set of matrices $\{(X_j-x_j I),(H - EI)\}$ can be continued to commuting.

\subsection{Maxwell's equations as a Hermitian eigenproblem}

To apply the spectral localizer to photonic structures, the system must permit a description in terms of an effective Hamiltonian and position operators. Although there are some classes of photonic systems that can be approximated as tight-binding lattices \cite{rechtsman_photonic_2013,hafezi_imaging_2013,wimmer_experimental_2017} and could be immediately analyzed using Eq.\ (\ref{eq:loc}), here we seek a generic framework that is applicable to all photonic systems. Thus, in this section we will recast Maxwell's equations as a unique Hermitian eigenvalue problem and analyze the relationship between the symmetries of the structure and its effective Hamiltonian. To do so, we assume that all of the materials used in the system are linear, with spatially local responses, and that the fields possess a harmonic time dependence, $e^{-i\omega t}$. Under these conditions, Maxwell's source-free equations are
\begin{align}
&\nabla \times \vec{E}(\vec{x}) = i \omega \bar{\mu}(\vec{x},\omega)\vec{H}(\vec{x}), \label{eq:curlE} \\
&\nabla \times \vec{H}(\vec{x}) = -i \omega \bar{\varepsilon}(\vec{x},\omega) \vec{E}(\vec{x}), \label{eq:curlH} \\
&\nabla \cdot \left[\bar{\varepsilon}(\vec{x},\omega) \vec{E}(\vec{x}) \right] = 0, \label{eq:divE} \\
&\nabla \cdot \left[\bar{\mu}(\vec{x},\omega) \vec{H}(\vec{x}) \right] = 0. \label{eq:divH}
\end{align}
Here, $\mathbf{E}(\mathbf{x})$ and $\mathbf{H}(\mathbf{x})$ are the electric and magnetic fields, and $\bar{\varepsilon}(\mathbf{x},\omega)$ and $\bar{\mu}(\mathbf{x},\omega)$ are the spatially varying, possibly frequency-dependent, permittivity and permeability tensors of the system's constituent materials. Strictly speaking, it is not possible for a material to be both dispersive (i.e., possess a frequency dependent response) and completely lossless, as this violates the Kramers-Kronig relations \cite{jackson_classical_1998}. However, it is necessary for our framework to incorporate the possibility of dispersion, as many of the materials used in the construction of topological photonic systems are inherently dispersive (for example, magneto-optic materials that can be used to break time-reversal symmetry). Thus, we assume that any dispersive materials have narrow absorption lines that are sufficiently far away from the frequency ranges of interest, such that $\bar{\varepsilon}(\mathbf{x},\omega)$ and $\bar{\mu}(\mathbf{x},\omega)$ are approximately Hermitian within those frequency ranges. 

In contrast to other classes of physical systems, photonic systems are somewhat unusual as they generically possess two independent mechanisms through which they can dissipate energy: material absorption and radiation. Thus, even if all of a photonic system's constituent materials are energy-conserving, the system can still be rendered non-Hermitian by radiative boundary conditions \cite{sommerfeld_partial_1949,cerjan_why_2016}, which physically represent the loss of energy from a finite region due to radiation. As such, to obtain a Hermitian eigenvalue problem for a finite photonic system, we require that the system be bounded by a Hermitian boundary condition, such as periodic boundary conditions or a perfect electric conductor (PEC, i.e., Dirichlet boundary conditions on $\mathbf{E}(\mathbf{x})$). Then, for non-zero frequencies, Eqs.\ (\ref{eq:curlE}) and (\ref{eq:curlH}) form a self-consistent generalized Hermitian eigenvalue problem,
\begin{equation}
    W \boldsymbol{\uppsi}(\mathbf{x}) = \omega M(\vec{x},\omega) \boldsymbol{\uppsi}(\mathbf{x}), \label{eq:genEig}
\end{equation}
in which $\boldsymbol{\uppsi}(\mathbf{x}) = (\mathbf{H}(\mathbf{x}), \mathbf{E}(\mathbf{x}))^\trans$,
\begin{equation}
    W = \left( \begin{array}{cc}
   0 & -i \nabla \times \\
    i \nabla \times & 0
    \end{array} \right),
\end{equation}
and
\begin{equation}
    %\;\;\; \textrm{and} \;\;\; 
    M(\vec{x},\omega) = \left( \begin{array}{cc}
    \bar{\mu}(\mathbf{x},\omega) & 0 \\
    0 & \bar{\varepsilon}(\mathbf{x},\omega) 
    \end{array} \right).
\end{equation}
Even though Eq.\ (\ref{eq:genEig}) only retains Eqs.\ (\ref{eq:curlE}) and (\ref{eq:curlH}) from Maxwell's equations, it maintains a complete description of the photonic system for any $\omega \ne 0$; Eqs.\ (\ref{eq:divE}) and (\ref{eq:divH}) can be recovered by taking the divergence of Eq.\ (\ref{eq:genEig}) and using the vector calculus identity $\nabla \cdot \nabla \times \vec{F}(\vec{x}) = 0$ for any vector field $\vec{F}(\vec{x})$. In general, solutions to Eq.\ (\ref{eq:genEig}) for dispersive materials can be found using iterative methods. However, it is also possible to remove the frequency dependence from $M(\vec{x},\omega)$ by adding auxiliary fields, and associated equations of motion for the material's internal degrees of freedom, to Eq.\ (\ref{eq:genEig}) \cite{raman_photonic_2010}; this allows for the generalized eigenproblem to be solved using standard methods at the cost of increasing the sizes of $\boldsymbol{\uppsi}$, $W$, and $M$.

To convert Eq.\ (\ref{eq:genEig}) into a unique ordinary Hermitian eigenvalue equation, we make the final assumption that $M(\vec{x},\omega)$ is positive semidefinite, at least over the frequency range of interest. Physically, this is not a significant restriction beyond the prior assumption that the constituent materials are energy-conserving, as it is effectively equivalent to requiring that the system's materials are dielectrics of some variety, possibly anisotropic or magneto-optic. Thus, for these frequencies of interest, $M(\vec{x},\omega)$ is guaranteed to possesses a unique, Hermitian, positive semidefinite square root matrix, $M^{1/2}(\vec{x},\omega)$. As such, by defining $\boldsymbol{\upphi}(\mathbf{x}) = M^{1/2}(\vec{x},\omega) \boldsymbol{\uppsi}(\mathbf{x})$, Eq.\ (\ref{eq:genEig}) can be written as
\begin{equation}
     H_{\textrm{eff}}(\vec{x},\omega) \boldsymbol{\upphi}(\mathbf{x}) = \omega \boldsymbol{\upphi}(\mathbf{x}), \label{eq:ordEig}
\end{equation}
where the system's effective Hamiltonian, 
\begin{equation}
H_{\textrm{eff}}(\vec{x},\omega) = M^{-1/2}(\vec{x},\omega) W M^{-1/2}(\vec{x},\omega) \label{eq:effHam}
\end{equation}
is both Hermitian and uniquely defined for every frequency. Note that even if a photonic system's constituent materials are non-dispersive (i.e., $M(\vec{x},\omega) = M(\vec{x})$), Gauss' laws (Eqs.\ (\ref{eq:divE}) and (\ref{eq:divH})) prohibit redefining the system's ``zero frequency'' to an arbitrary value as can be done in systems described by standard tight-binding models; $\omega = 0$ is a polarization singularity in Maxwell's equations where longitudinal modes appear \cite{christensen_location_2022}.

The determination of a material's topology is inextricably linked to its symmetries, both local \cite{schnyder2008,kitaev2009,ryu2010topological} and crystalline \cite{benalcazar2017quad,benalcazar2017quadPRB,kruthoff_topological_2017,bradlyn_topological_2017,po_symmetry-based_2017,cano_building_2018,benalcazar_quantization_2019}. In photonic systems, the presence or absence of a given symmetry typically manifests in its constituent materials and their spatial distribution, i.e., in $M(\vec{x},\omega)$. Thus, it is important to understand what happens to a symmetry of $M(\vec{x},\omega)$ when constructing $H_{\textrm{eff}}(\vec{x},\omega)$. Fortunately, one can use the Weierstrass approximation theorem \cite{Vaughn_Intro_math_phys} to prove that if $M(\vec{x},\omega)$ commutes or anti-commutes a unitary or anti-unitary symmetry, then $M^{-1/2}(\vec{x},\omega)$ possesses the same symmetry (see Supplementary Material for a proof), which greatly simplifies the analysis of the symmetries of $H_{\textrm{eff}}(\omega)$.

%We do \textit{not} require that $\bar{\varepsilon}(\mathbf{x},\omega)$ and $\bar{\mu}(\mathbf{x},\omega)$ possess any translational or crystalline symmetries,  

\subsection{Applying the spectral localizer to Maxwell's equations}

Given the unique effective Hamiltonian for Maxwell's equations, Eq.\ (\ref{eq:effHam}), coupled with an understanding of how its symmetries relate to those of the underlying photonic structure, the final step required to apply the spectral localizer, Eq.\ (\ref{eq:loc}), is to define the photonic system's position matrices. At present, the mathematics that underpins the spectral localizer's ability to assess whether a set of matrices can be continued to commuting must be applied to (arbitrarily large) finite matrices. Thus, one way to both construct position operators and ensure finite operators is to discretize the photonic system. This discretization can be performed using standard methods, such as the finite-difference Yee grid \cite{yee_numerical_1966}, or finite element methods \cite{monk_finite_2003}. Although any choice of discretization effectively imposes an (arbitrarily high) upper frequency cutoff to the photonic system's spectrum, the spectral localizer is provably local in both position and frequency. Thus, as the spectral localizer is insensitive to the system's details at frequencies sufficiently far away from the frequency range of interest \cite{raghu_analogs_2008}, invariants and gaps determined using it are guaranteed to converge.

Upon choosing a discretization scheme, we can rewrite $W$ and $M(\vec{x},\omega)$ as finite matrices that directly incorporate information about the system across its entire spatial extent. In particular, $M(\omega)$ can be expressed as a block diagonal matrix,
\begin{equation}
    M(\omega) = \left( \begin{array}{ccc}
        M(\vec{x}_1,\omega) &  & \\
         & M(\vec{x}_2,\omega) & \\
         & & \ddots
        \end{array} \right),
\end{equation}
in which each block represents the material properties at a particular vertex of the discretized system, while $W$ is a (usually sparse) matrix representing the curl operations and boundary conditions whose exact form will depend on the specific discretization scheme chosen. Note, as $M(\vec{x},\omega)$ is positive semidefinite, $M(\omega)$ is as well, and thus it also possesses a unique square root whose symmetries are directly given by those of the full structure,
\begin{multline}
    M(\omega)\mathcal{U} \pm \mathcal{U}M(\omega) = 0 \implies \\ 
    M^{-1/2}(\omega)\mathcal{U} \pm \mathcal{U}M^{-1/2}(\omega) = 0. \label{eq:sqrtMprops}
\end{multline}
where $\mathcal{U}$ is a unitary or anti-unitary operator (see Supplementary Material for a proof). Thus, altogether,
\begin{equation}
    H_{\textrm{eff}}(\omega) = M^{-1/2}(\omega) W M^{-1/2}(\omega) \label{eq:effHam2}
\end{equation}
is a unique $6n$-by-$6n$ matrix, where $n$ is the number of vertices in the discretization. Finally, in this basis, the position operators are simply the coordinates of the vertices of the discretization scheme.

For the remainder of this study, we make use of a finite-difference Yee grid discretization.

\section{Photonic Chern insulator}

\begin{figure*}[th]
    \centering
    \includegraphics{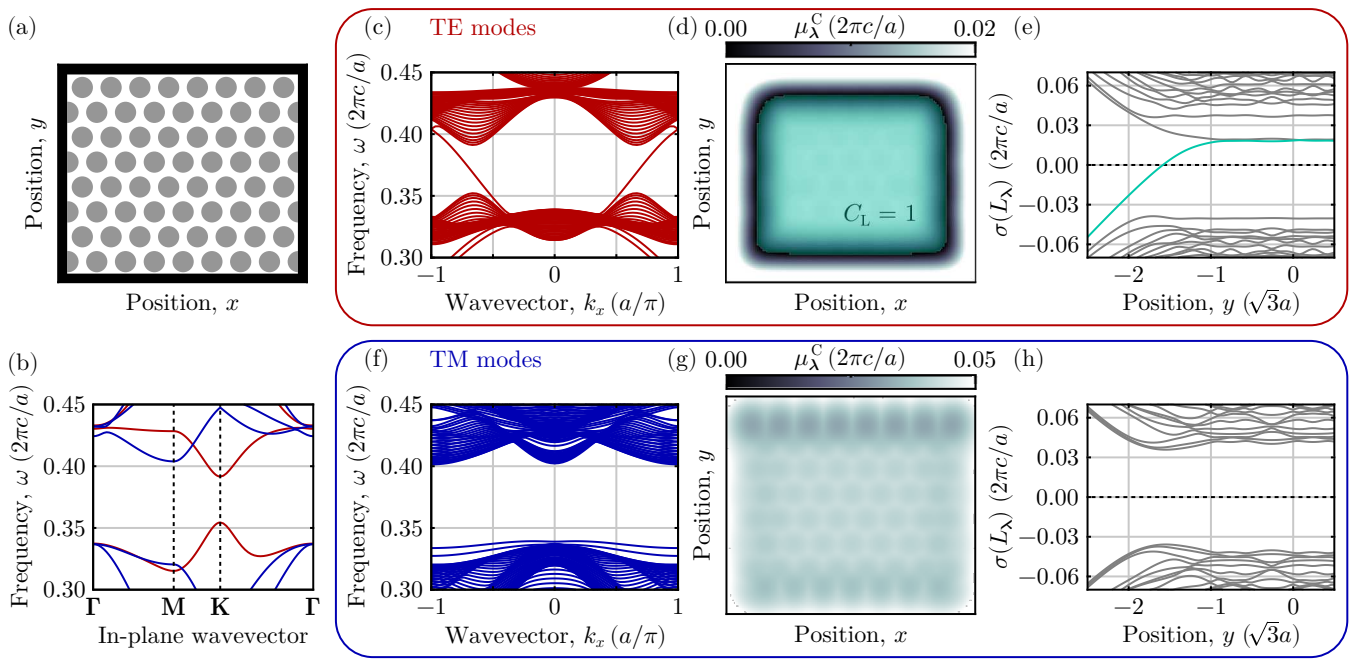}
    \caption{(a) Schematic of a 2D photonic crystal consisting of high-dielectric rods, $\varepsilon_{jj} = 14$ for $j=x,y,z$, with radius $r = 0.37a$ embedded in a gyro-electric background, $\varepsilon_{jj} = 1$ and $\varepsilon_{xy} = -0.4i$. The lattice constant is $a$ and all materials have $\mu_{jj} = 1$. The black boundary indicates where perfect electric conductor boundary conditions were used. (b) Bulk band structure for both the TE (red) and TM (blue) bands. (c) Ribbon band structure for the TE fields with PEC boundaries along the two non-periodic edges. (d) Spatially resolved local gap, $\mu_{\boldsymbol{\lambda}}^{\textrm{C}}(X,Y,H_{\textrm{eff}})$, at $\boldsymbol{\lambda} = (x, y, \omega = 0.37(2\pi c/a))$ with $\kappa = 0.25$. The gap is in units of $(2\pi c/a)$. The topological region of the system with $C_L = 1$ is shown in lime, and the system's scale is identical to (a). (e) Full localizer spectrum at $\boldsymbol{\lambda} = (x =0, y, \omega = 0.37(2\pi c/a))$. The eigenvalue which is responsible for the change of the signature of $L_{\boldsymbol{\lambda}}$ is highlighted in lime.
    (f)-(h) Similar to (c)-(e), except for the TM fields.
      \label{fig:haldane}}
\end{figure*}

To provide a concrete example of how the topology of a photonic crystal can be determined using the spectral localizer directly from Maxwell's equations, we first consider the well-known photonic Chern insulator proposed by Haldane and Raghu \cite{haldane_possible_2008,raghu_analogs_2008}. This system consists of a triangular lattice of high-dielectric rods embedded in a low-dielectric gyro-electric background (Fig.\ \ref{fig:haldane}a), and in which the frequency-dependence of the gyro-electric response has been ignored. In the absence of any time-reversal symmetry breaking (i.e., the external magnetic field is turned off), this system exhibits a Dirac point at $\mathbf{K}$ in its Brillouin zone for its transverse electric (TE) modes that coincides with a complete band gap in the transverse magnetic (TM) modes. As the strength of the time-reversal symmetry breaking is increased, a topological band gap opens in the TE modes (Fig.\ \ref{fig:haldane}b), yielding boundary-localized chiral edge TE states that can be seen in a ribbon band structure (Fig.\ \ref{fig:haldane}c).

As a direct comparison, we show how the spectral localizer reveals the topology of this photonic Chern insulator without calculating its band structure or eigenstates. In a 2D photonic structure, the spectral localizer can be explicitly written as
\begin{multline}
    L_{\boldsymbol{\lambda} = (x,y,\omega)}(X,Y,H_{\textrm{eff}}) = \\
        \left( \begin{array}{cc}
        H_{\textrm{eff}} - \omega I & \kappa(X-xI) - i\kappa(Y-yI) \\
        \kappa(X-xI) + i\kappa(Y-yI) & -(H_{\textrm{eff}} - \omega I)
        \end{array} \right). \label{eq:loc2d}
\end{multline}
Here, Maxwell's equations are directly incorporated through the definition of $H_{\textrm{eff}}$, Eq.\ (\ref{eq:effHam2}), and we have used PEC boundary conditions to ensure the system is finite. In the system's bulk, we find that the local Chern number for the TE modes, Eq.\ (\ref{eq:C}), is non-trivial, $C_\textrm{L}(x,y,\omega) = 1$, while beyond the system's boundaries the system has trivial topology, $C_\textrm{L}(x,y,\omega) = 0$ (Figs.\ \ref{fig:haldane}d,e). Thus, as the system's topology must switch between these two domains, the local gap must close at the domain boundary, $\mu_{\boldsymbol{\lambda}}^{\textrm{C}} = 0$, which approximately coincides with the system's physical boundaries. The closing of the local gap is a direct manifestation of bulk-boundary correspondence in the system, and indicates the presence of boundary-localized photonic chiral edge states. In contrast, the same quantities for the system's TM modes show that this modal sector is topologically trivial within the same complete band gap, regardless of whether $(x,y)$ are chosen within or outside the system's bulk (Figs.\ \ref{fig:haldane}f-h).

Beyond qualitative agreement, the topology predicted by the spectral localizer demonstrates quantitative agreement with the band theoretic calculation: the system's local gaps in both polarization sectors agree with its respective bulk band gaps. The complete band gap for the TE modes has a width of approximately $\Delta \omega \approx 0.04 (2\pi c/a)$. Using the spectral localizer to calculate the local gap at the middle of the TE band gap ($\omega = 0.37 (2\pi c/a)$), we find that $\mu_{\boldsymbol{\lambda}}^{\textrm{C}} \approx 0.02 (2\pi c/a)$ in the photonic crystal's interior. Thus, by probing the system at the center of its bulk band gap, we find that the local gap is approximately half of the bulk band gap, i.e., these two measures of topological protection are in nearly exact agreement (from the probed central frequency, a shift of half the bulk band gap is necessary to reach the nearest bulk band edge, which is what $\mu_{\boldsymbol{\lambda}}^{\textrm{C}}$ measures in a gapped system's bulk). For the TM modes, the result is similar, with the spectral localizer yielding a local gap, $\mu_{\boldsymbol{\lambda}}^{\textrm{C}} \approx 0.04 (2\pi c/a)$, in the system's bulk that is a bit larger than the frequency distance between $\omega = 0.37 (2\pi c/a)$ and the lower frequency bound of the nearest bulk TM band, $\omega \approx 0.40 (2\pi c/a)$. Given the rigorous connection between the size of the local gap and the system's topological protection \cite{loringPseuspectra}, the larger TM local gap indicates that, in terms of adding a perturbation to the photonic system, it is more difficult than the bulk band gap suggests to change the TM sector to possess non-trivial topology.

A note on implementation --- it is not necessary to calculate the full spectrum of $L_{\boldsymbol{\lambda}}$ to find its signature, and performing the calculation this way may be prohibitively computationally expensive for many photonic systems. Instead, due to Sylvester's Law of Inertia \cite{sylvester_xix_1852,higham2014sylvester}, one can first find the LDLT decomposition, $L_{\boldsymbol{\lambda}} = N D N^\dagger$, and then $\textrm{sig}(L_{\boldsymbol{\lambda}}) = \textrm{sig}(D)$, where $D$ is diagonal. Thus, as there are commonly available LDLT decomposition methods for sparse matrices, $C_\textrm{L}$ can be efficiently computed.

\section{Topological phases in gapless photonic crystals}

Separating a system's measure of topological protection from its bulk band gap enables the definition of topology in gapless systems \cite{cerjan_local_2021}, which is of particular importance in photonic systems as it is experimentally challenging to realize photonic Chern insulators using magneto-optic materials \cite{bahari_nonreciprocal_2017,bahari_photonic_2021}. Specifically, as known magneto-optic materials provide relatively modest changes to a system's dielectric tensor at technologically relevant wavelengths \cite{espinola_magneto-optical_2004,bi_-chip_2011}, the design of photonic Chern insulators is currently burdened by requiring the system's geometry to serve two separate functions: maximizing the size of the topological band gap while maintaining a complete band gap at the same frequency. Instead, in this section, we show that the ability to characterize the topology of systems even in the absence of a bulk band gap enables the design of photonic structures that can focus solely on maximizing the size of a local gap, potentially expanding the frequency range over which a system can be proven to exhibit topological behaviors. Although any resulting boundary-localized resonances can hybridize with the available degenerate bulk states, their partial edge localization and reduced ability to back-scatter \cite{bergman_bulk_2010,junck_transport_2013} may still be useful for enhancing light-matter interactions, such as lasing \cite{bahari_nonreciprocal_2017,st-jean_lasing_2017,bandres_topological_2018,zeng_electrically_2020,yang_spin-momentum-locked_2020,shao_high-performance_2020,bahari_photonic_2021,dikopoltsev_topological_2021,yang_topological-cavity_2022} or soliton formation \cite{lumer_self-localized_2013,ablowitz_linear_2014,leykam_edge_2016,zhou_optical_2017,mukherjee_observation_2020,maczewsky_nonlinearity-induced_2020,leykam_probing_2021,mittal_topological_2021,jurgensen_quantized_2021,jurgensen_chern_2022,fu_nonlinear_2022,maluckov_nonlinear_2022}.

\begin{figure}[t!]
    \centering
    \includegraphics[width=1.0\columnwidth]{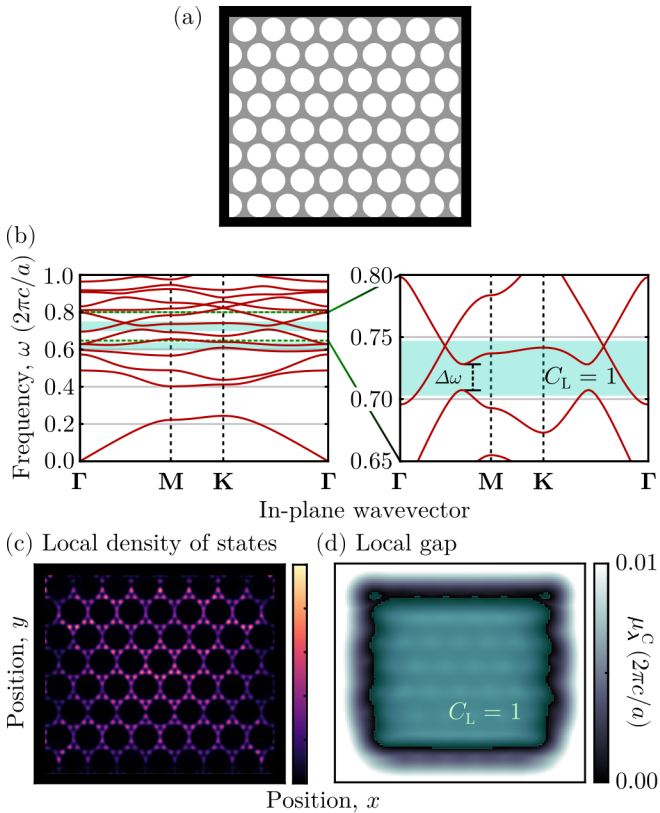}
    \caption{(a) Schematic of a 2D photonic crystal of air holes with radius $r = 0.4a$ embedded in a high-dielectric background, $\varepsilon_{xx} = \varepsilon_{yy} = 12.25$. Both the air and background are approximated to be frequency-independent gyro-electric materials, $\varepsilon_{xy} = -0.24i$. The lattice constant is $a$ and all materials have $\mu = 1$.  The black boundary indicates where perfect electric conductor boundary conditions were used. (b) Bulk band structure of this photonic crystal's TE modes, lime-colored regions indicate frequency ranges where $C_{\textrm{L}} = 1$. The zoomed in right panel indicates one of the topological frequency ranges, with the local band gap, $\Delta \omega$, indicated. (c) Local density of states for the $H_z$ component of the fields for $\omega = 0.720 (2 \pi c/a)$. (d) 2D local gap and local Chern number for $\omega = 0.720 (2 \pi c/a)$ and $\kappa = 0.10$. For both (c) and (d), the spatial scale is the same as the system shown in (a).
      \label{fig:gapless}}
\end{figure}

The crucial realization that enables the prediction of gapless topological photonic systems is that, although a bulk band gap guarantees a similarly sized local gap \cite{LoringSchuBa_even,LoringSchuBa_odd}, it is possible for $\mu_{\boldsymbol{\lambda}}^{\textrm{C}} > 0$ even in the absence of a bulk band gap. Thus, in regions with local gaps, the spectral localizer can be used to predict robust, non-trivial topology and associated phenomena. Moreover, such gapless topological systems still possess a bulk-boundary correspondence as the local gap is required to close, $\mu_{\boldsymbol{\lambda}}^{\textrm{C}} \rightarrow 0$, across the system's boundary for the local topology to change.
An example of a gapless gyro-electric photonic crystal with non-trivial topology is illustrated in Fig.\ \ref{fig:gapless}, where we simulate the TE modes of a 2D photonic crystal consisting of a triangular lattice of air holes embedded in a high-dielectric background. We artificially consider both the air and high-dielectric materials to have a magneto-optic response, which serves as an approximation of the experimental realization of these systems through layering a patterned slab of high-dielectric material on top of a solid slab of magneto-optic material \cite{bahari_nonreciprocal_2017,bahari_photonic_2021}. As can be proven using the 2D spectral localizer, Eq.\ (\ref{eq:loc2d}), this system possesses a pair of frequency ranges with non-trivial topology that can be identified by the local Chern number, $C_\textrm{L}$ (see Fig.\ \ref{fig:gapless}b), despite the fact that \textit{neither} of these topological frequency ranges coincide with complete band gaps of the system's TE modes. Moreover, even though the presence of bulk states obscures the identification of any boundary-localized resonances in the system's local density of states (see Fig.\ \ref{fig:gapless}c), we can explicitly confirm the presence and approximate locations of this system's chiral edge resonances in these topological frequency ranges by determining where its local gap vanishes, Fig.\ \ref{fig:gapless}d. Quantitatively, this calculation also shows that for $\omega$ in the middle of the topological frequency range, the local gap within the system, $\mu_{\boldsymbol{\lambda}}^{\textrm{C}} \approx 0.01 (2 \pi c / a)$, is approximately half of the incomplete topological band gap, $\Delta \omega \approx 0.02 (2 \pi c / a)$ (marked in Fig.\ \ref{fig:gapless}b), indicating that this topological phase is protected against disorder that does not close the incomplete band gap, regardless of the absence of a complete band gap.

Beyond reducing the design constraints for developing topological photonic crystals, the ability to identify topological frequency ranges without requiring a bulk band gap enables the discovery and design of topological photonic crystal systems that operate at high normalized frequencies, where they are unlikely to possess a complete band gap. As the normalized frequencies shown in Fig.\ \ref{fig:gapless} are equal to the ratio of the lattice constant, $a$, to the operating wavelength, $\lambda$, i.e., $\omega a / (2\pi c) = a / \lambda$, using phenomena that appear at higher normalized frequencies (for a fixed operating wavelength) improves these systems' fabrication tolerances by increasing the system's lattice constant. 

\section{An operator-based topological crystalline invariant}

Recently, there has been significant interest in photonic systems that exhibit topological states protected by crystalline invariants, as these structures can exhibit robust waveguide- and cavity-like states for enhancing light-matter interactions \cite{ota_active_2020} and do not require materials or configurations that break time-reversal symmetry. However, as crystalline symmetries fall outside of the standard classification of topological systems \cite{schnyder2008,kitaev2009,ryu2010topological}, invariants that rely on crystalline symmetries for predicting topological behaviors must be calculated using a separate theoretical framework, such as through Wannier centers \cite{kruthoff_topological_2017,bradlyn_topological_2017,po_symmetry-based_2017,cano_building_2018} or symmetry indicator invariants \cite{benalcazar2017quad,benalcazar2017quadPRB,benalcazar_quantization_2019}. In this section, we show how to incorporate crystalline symmetries directly into the operator-based framework of the spectral localizer, and we provide an example of such a topological crystalline invariant. Not only does this yield a theory for topological crystalline systems that is not dependent on a system's Bloch eigenstates, but it also shows how crystalline symmetries can be placed on equal footing with those symmetries considered in the ten-fold classification scheme.

The key mathematical observation that allows for the spectral localizer to consider topological crystalline structures is that the $C^*$-algebra theorems that underpin its operator-based framework \cite{Moutuou_GradedBrauerGroups,altland1997nonstandard,loringPseuspectra} are agnostic to the specific physical interpretation of the operators. Thus, any local topological invariant found using the spectral localizer can be repurposed for any set of symmetries, so long as those symmetries obey similar relations with system's operators, $\{X_1,\cdots,X_d,H \}$. This argument is best illustrated using an example. Consider a 1D system with chiral symmetry (i.e., class AIII), such that the chiral operator, $\Pi$, anti-commutes with the system's Hamiltonian, $H\Pi = - \Pi H$, and commutes with its position operator, $X \Pi = \Pi X$ (in a lattice-vertex basis, both $\Pi$ and $X$ are diagonal, so this commutation relation is guaranteed in general). The topology of such a system can be determined both using traditional methods, such as a winding number \cite{asboth2016short_course_top_ins}, or using the spectral localizer's associated invariant,
\begin{align}
    \nu_{\textrm{L}}\left(x,0\right) & = \tfrac{1}{2}\text{sig}\left[ \left(\begin{array}{cc} 0 & I \end{array}\right)
    L_{\left(x,0\right)}(X,H)
    \left(\begin{array}{c} \Pi \\ 0 \end{array}\right)\right] \notag \\
    & = \tfrac{1}{2}\text{sig}\left[ (\kappa(X - xI) + i H) \Pi \right] \in \mathbb{Z}, \label{eq:indAIII}
\end{align}
which is only well-defined at $E=0$, as chiral symmetry can only protect states at mid-gap. However, any other unitary operator, $\mathcal{U}$, that satisfies the same set of relations with the system's operators, $\mathcal{U}H = \mp H\mathcal{U}$ and $\mathcal{U}X = \pm X\mathcal{U}$, will possess a nearly identical invariant as a chiral symmetric system, with $\Pi \rightarrow \mathcal{U}$ in Eq.\ (\ref{eq:indAIII}) and possibly also $X \leftrightarrow H$ and $x \leftrightarrow E$ depending on whether $H$ or $X$ anti-commutes with $\mathcal{U}$.

In particular, this argument for re-purposing existing operator-based invariants for crystalline symmetries can be immediately applied to determine the topology of inversion symmetric systems, as the inversion operator, $\mathcal{I}$, satisfies $\mathcal{I}H = H\mathcal{I}$ and $\mathcal{I}X = - X\mathcal{I}$. To demonstrate the versatility and generality of this method, we study an inversion-symmetric photonic system consisting of a bipartite array of air holes in a high-dielectric background that possesses an inversion center (Fig.\ \ref{fig:inv}a), similar to designs used in photonic nanobeams \cite{ohta_strong_2011,burek_high_2014,sipahigil_integrated_2016}. Due to the staggered spacing of the air holes, this system exhibits a bulk band gap in its TE modes, in which a defect-localized state appears that is bound to the inversion center (Figs.\ \ref{fig:inv}b,c). When the air holes are evenly spaced, no such localized state appears within the system's lowest bands (Figs.\ \ref{fig:inv}d,e).

\begin{figure}[t!]
    \centering
    \includegraphics[width=1.0\columnwidth]{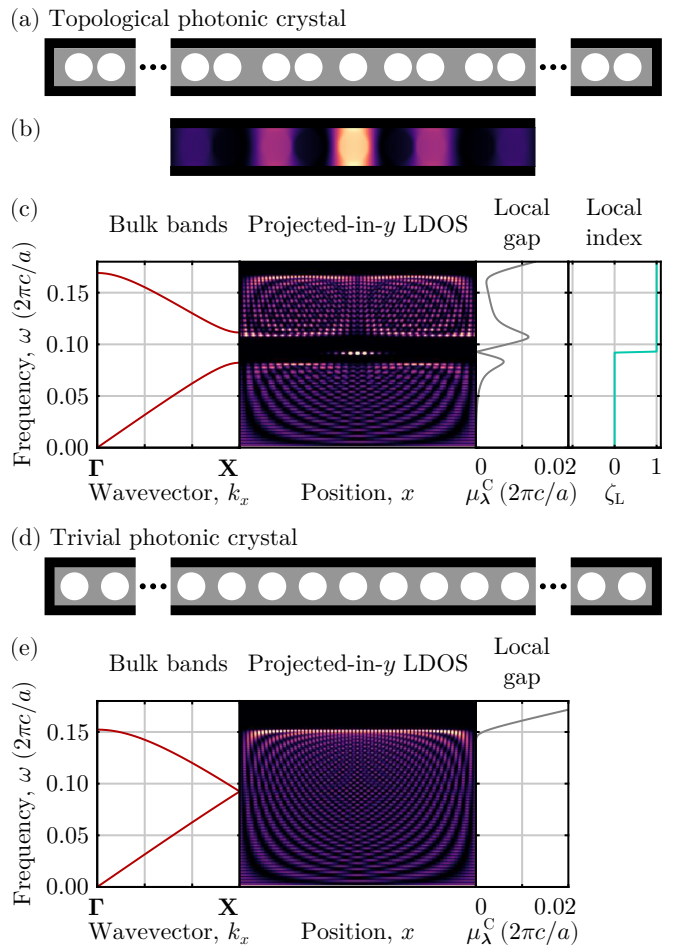}
    \caption{(a) Schematic of a 1D photonic crystal consisting of air holes with radius $r = 0.35a$ embedded in a high-dielectric background, $\varepsilon = 14$ and $\mu = 1$, where $a$ is the lattice constant of the uniform system. Spacing between neighboring air holes is alternately increased and decreased by $0.2a$. The full system consists of an inversion-symmetric defect and 16 pairs of air holes on each side. The black boundary indicates where perfect electric conductor boundary conditions were used. (b) Local density of states for the $H_z$ component of the fields for $\omega = 0.091 (2 \pi c/a)$ shown on the same scale as (a). (c) (From left to right) Bulk band structure, projected-in-$y$ local density of states (LDOS), local gap (see Eq.\ \ref{eq:muC}), and local index (see Eq.\ \ref{eq:zeta}) for this finite photonic system. The local gap and index are calculated at $x = 0$ using $\kappa = 0.01$. The local gap is given in units of $(2\pi c/a)$. (d),(e) Similar to (a) and (c), but for the uniform system, with all air holes separated by $a$. The local index is not shown in (e) to emphasize its lack of meaning in the absence of a local gap.
      \label{fig:inv}}
\end{figure}

To prove that this defect-localized state is of topological origin, we construct the system's 1D spectral localizer, which can be explicitly written as,
\begin{align}
    &L_{\boldsymbol{\lambda} = (x,\omega)}(X,H_{\textrm{eff}}) =
        \left( \begin{array}{cc}
        0 & A \\
        A^\dagger & 0
        \end{array} \right), \label{eq:loc1d} \\
    &A = \kappa(X-xI) - i(H_{\textrm{eff}}-\omega I). \notag
\end{align}
Although the photonic system is not uniform in $y$, here we are purposefully omitting $Y-yI$ in Eq.\ (\ref{eq:loc1d}), which has the effect of projecting the entire system onto the $x$-axis within the spectral localizer's framework. By doing so, we can re-purpose the 1D local winding number, Eq.\ (\ref{eq:indAIII}), for inversion symmetric systems to yield the crystalline invariant,
\begin{equation}
    \zeta_{\textrm{L}}\left(0,\omega \right) = \tfrac{1}{2}\text{sig}\left[ (H_{\textrm{eff}} - \omega I + i \kappa X) \mathcal{I} \right] \in \mathbb{Z}. \label{eq:zeta}
\end{equation}
Just as the winding number can only protect states in chiral symmetric systems at mid-gap, $\zeta_{\textrm{L}}$ can only protect states at the inversion center, $x = 0$, but these states can have any frequency, $\omega$. Using this invariant, we observe that the defect-localized state in the bipartite photonic crystal is topological, as this state's appearance coincides with a shift in $\zeta_{\textrm{L}}$ (i.e., there is a bulk-boundary correspondence), and the state's frequency is protected by the large local gap, $\mu_{\boldsymbol{\lambda}}^{\textrm{C}}$, that appears at both immediately higher and lower frequencies (right panels of Fig.\ \ref{fig:inv}c). In contrast, the uniform photonic crystal is topologically trivial, and completely lacks a local gap within the frequency range of the first TE band (right panel of Fig.\ \ref{fig:inv}e). (For completeness, we note that an analogous invariant to $\zeta_{\textrm{L}}$ could be constructed using $Y$ in place of $X$, which also anti-commutes with $\mathcal{I}$. However, for the systems in Fig.\ \ref{fig:inv}, this $Y$-based invariant is always trivial.)

There are a few points that are worth emphasizing for the spectral localizer's topological crystalline invariants. First, unlike previous crystalline invariants for inversion-symmetric systems, such as the Zak phase \cite{zak_berrys_1989} and symmetry indicators \cite{benalcazar_quantization_2019}, $\zeta_{\textrm{L}}$ is a $\mathbb{Z}$ invariant, not a $\mathbb{Z}_2$ invariant, and thus it can identify topology in systems that may be mis-identified as trivial by these previous invariants. Second, the standard winding number for 1D systems and the inversion invariant in Eq.\ (\ref{eq:zeta}) are both manifestations of the same type of $K$-theory; specifically, they determine elements in some $K$-theory group of a graded real $C^*$-algebra of matrices and operators that respect symmetries induced by two anti-unitary operators. (See the appendix of \cite{Moutuou_GradedBrauerGroups} how such algebras arise as a mathematical version of the ten-fold way \cite{altland1997nonstandard}.) Finally, as the spectral localizer is provably local \cite{loring_guide_2019,cerjan_quadratic_2022}, a system need not be perfectly inversion symmetric across its entire extent --- beyond some window whose width is related to $\Vert[H,\kappa X]\Vert$, perturbations to $H$ and $X$ away from a chosen frequency and position cannot meaningfully effect the spectral localizer's properties. Thus, in practice, a system need not be globally inversion symmetric to exhibit topological states protected by inversion symmetry, only locally so.

\section{Discussion}

In conclusion, we have developed an operator-based framework for determining a photonic structure's topology using the spectral localizer. As this theory is based entirely on the system's real-space description, it is immediately applicable to aperiodic and disordered structures which do not possess a band structure or Bloch eigenstates. Moreover, using this framework, we have shown two developments for topological photonic systems. First, by leveraging the spectral localizer's ability to define a measure of topological protection separate from a system's bulk band gap, we have shown that it is possible to find robust topological states even in gapless photonic systems. This development has potentially significant experimental implications, as some previous implementations of topological lasers that do not exhibit large bulk band gaps may, in fact, possess more topological protection than their bulk band gaps suggest \cite{bahari_nonreciprocal_2017,bahari_photonic_2021}. Second, as the mathematical theorems which underpin the spectral localizer's framework are agnostic to the specific physical meaning of any of the system's operators, the spectral localizer's invariants can be immediately re-purposed to handle crystalline symmetries. This development shows how crystalline symmetries can be placed on equally strong footing to the topology of ``local'' symmetries considered in the ten-fold classification of lattices \cite{schnyder2008,kitaev2009,ryu2010topological}. Moreover, our framework can be used to determine both the topology of photonic topological crystalline insulators, and the strength of the protection of any localized states, without the need to calculate symmetry indicators or Wannier centers \cite{benalcazar2017quad,benalcazar2017quadPRB,kruthoff_topological_2017,bradlyn_topological_2017,po_symmetry-based_2017,cano_building_2018,benalcazar_quantization_2019}. We anticipate this development will substantially increase the possible design space for developing new topological cavity-like states in photonic systems. (Our framework also avoids issues associated with the polarization singularity at zero frequency and momentum \cite{christensen_location_2022}.) Looking forward, as our framework is not reliant upon a system's band structure, it may offer the possibility of yielding a general theory of topology in non-linear photonic systems.

\begin{acknowledgments}
T.L. acknowledges support from the National Science Foundation, grant DMS-2110398.
A.C. and T.L. acknowledge support from the Center for Integrated Nanotechnologies, an Office of Science User Facility operated for the U.S.\ Department of Energy (DOE) Office of Science, and the Laboratory Directed Research and Development program at Sandia National Laboratories. Sandia National Laboratories is a multimission laboratory managed and operated by National Technology \& Engineering Solutions of Sandia, LLC, a wholly owned subsidiary of Honeywell International, Inc., for the U.S.\ DOE's National Nuclear Security Administration under contract DE-NA-0003525. The views expressed in the article do not necessarily represent the views of the U.S.\ DOE or the United States Government.
\end{acknowledgments}

\section*{Supplemental Material}
The supplementary material contains proofs for Eq.\ (\ref{eq:sqrtMprops}) and Refs.\ \cite{ConwayFunctionalAnalysis,LorindVides_Floquet_with_Symmetry,schulz2021invariants,weierstrass1885analytische}.

\bibliography{pseudo_for_maxwell}

\end{document}

% --- supplement: supplement.tex ---

\title{Supplemental Material: An operator-based approach to topological photonics}

\author{Alexander Cerjan}
\email{awcerja@sandia.gov}
\affiliation{Center for Integrated Nanotechnologies, Sandia National Laboratories, Albuquerque, New Mexico 87185, USA}

\author{Terry A.\ Loring}
\email{loring@math.unm.edu}
\affiliation{Department of Mathematics and Statistics, University of New Mexico, Albuquerque, New Mexico 87131, USA}

\date{\today}

\maketitle

\section*{Matrix square roots and symmetries}

In the main text we relied on the fact that if a positive semi-definite matrix $M$ satisfies a symmetry then so must $M^\frac{1}{2} $ and $M^{-\frac{1}{2}}$.  Indeed, this claim is also true for bounded operators on a Hilbert space.  Note that in operator theory \cite{ConwayFunctionalAnalysis}, the term positive is more common than positive semi-definite.  We will prove things for the finite matrix case, where eigenvalues are guaranteed and the spectral theorem is simpler to state.

If $M$ is positive semidefinite, then by the spectral theorem we can factor $M$ as 
\begin{equation}
M=UDU^\dagger \label{eqn:spectral_decomp}
\end{equation}
with $U$ unitary and $D$ diagonal with the diagonal elements $\lambda_j = D_{j,j}$ all in the closed, finite interval $[0,\|D\|]$.

Given a positive semidefinite matrix $M$ there will be  a unique positive
semidefinite matrix $N$ so that $N^2=M$.  Of all the many square roots of $M$, it is only $N$ that is anointed with the notation $M^\frac{1}{2}$.  A good way to understand $N$ is via Eq.~(\ref{eqn:spectral_decomp}).  Given that factorization, one quickly finds (see \cite[Ex.~2.16.]{Vaughn_Intro_math_phys}
\begin{equation}
M^\frac{1}{2}=UD_1U^\dagger \label{eqn:spectral_decomp_half}
\end{equation}
where $D_1$ is again diagonal, but with $j$th diagonal element $\sqrt{\lambda_j}$.
This way of calculating the square root makes it hard to track the effect of a symmetry, so we look to one of the many alternate means of calculating $M$ by hand.  We are not discussing numerical methods of computing matrix square roots that respect symmetries, but that is of interest \cite[\S~IV]{LorindVides_Floquet_with_Symmetry}. 

Given a sequence of polynomials $p_n$, if we select this so that we have
good convergence $p_n(\lambda)\rightarrow \sqrt{\lambda}$  then we will have  \begin{equation}
M^\frac{1}{2} = \lim_{n\rightarrow \infty} p_n(M). \label{eqn:root_by_limit_of_poly}
 \end{equation}
To be technical, we are applying the Weierstrass Approximation Theorem \cite[\S~6.2.1]{Vaughn_Intro_math_phys} which guarantees a sequence of  polynomials that uniformly converge on the interval  $[0,\|M\|]$.  The advantage of this approach is that with a polynomial $p(\lambda) = \sum a_n \lambda ^n$ we need not use Eq.~(\ref{eqn:spectral_decomp_half}). Instead, we can work with the more direct formula
\begin{equation}
p(M) = \sum a_n M^n . 
\end{equation}

We get now to our first result.  It is not new, but in the math literature one will generally not see any antiunitary operators, and will instead find results about elements of real $C^*$-algebras \cite{schulz2021invariants}.  As such, it is simpler to provide direct proofs rather than to translate between the two pictures.

\begin{lem}
Suppose $\mathcal{U}:\mathbb{C}^n \rightarrow \mathbb{C}^n$ is a unitary or antiunitary operator.  Suppose also that $M$ is a positive semidefinite $n$-by-$n$ matrix that we treat as a linear operator on $\mathbb{C}^n$.  If $\mathcal{U}$ commutes with $M$ then $\mathcal{U}$ commutes with $M^\frac{1}{2} $.
\end{lem}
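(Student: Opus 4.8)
The plan is to leverage the polynomial-limit formula~(\ref{eqn:root_by_limit_of_poly}), so that the commutation relation need only be checked on powers of $M$, where it follows mechanically, and then transported to the limit. First I would record the algebraic core. Since $\mathcal{U}$ is invertible with $\mathcal{U}^{-1}$ again (anti)unitary, the hypothesis $\mathcal{U} M = M\mathcal{U}$ reads $\mathcal{U} M \mathcal{U}^{-1} = M$. Conjugation of a product telescopes, $\mathcal{U} M^k \mathcal{U}^{-1} = (\mathcal{U} M \mathcal{U}^{-1})^k = M^k$, so $\mathcal{U}$ commutes with every power of $M$.

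Next I would assemble these powers into polynomials. Writing $p(\lambda)=\sum_k a_k \lambda^k$, I would compute $\mathcal{U} p(M)\mathcal{U}^{-1}$, and here the two cases diverge. If $\mathcal{U}$ is unitary (linear), the scalars pull straight through and $\mathcal{U} p(M)\mathcal{U}^{-1}=\sum_k a_k M^k = p(M)$ for every polynomial. If $\mathcal{U}$ is antiunitary (antilinear), conjugating a scalar multiple of a linear operator introduces complex conjugation, $\mathcal{U}(a_k M^k)\mathcal{U}^{-1}=\bar{a}_k M^k$, so that $\mathcal{U} p(M)\mathcal{U}^{-1}=\bar{p}(M)$, the polynomial with conjugated coefficients.

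This is where the main obstacle lies, and where positivity does real work. To force $\bar{p}=p$ in the antiunitary case I need the approximating polynomials to have \emph{real} coefficients. I would observe that this is exactly what the Weierstrass theorem supplies here: because $M$ is positive semidefinite, its eigenvalues lie in $[0,\|M\|]$, so the target $\sqrt{\lambda}$ is a real-valued continuous function on that real interval and is therefore uniformly approximated on it by real-coefficient polynomials $p_n$. With this choice, $\mathcal{U}$ commutes with each $p_n(M)$ in both the unitary and the antiunitary cases.

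Finally I would pass to the limit. Because $M$ is Hermitian, the spectral estimate $\|p_n(M)-M^{\frac{1}{2}}\| = \max_j |p_n(\lambda_j)-\sqrt{\lambda_j}| \le \sup_{\lambda\in[0,\|M\|]}|p_n(\lambda)-\sqrt{\lambda}|$ shows $p_n(M)\to M^{\frac{1}{2}}$ in norm. Since $\mathcal{U}$ is an isometry, $\|\mathcal{U} X\|=\|X\|$ for operators $X$, so left and right multiplication by $\mathcal{U}$ are norm-continuous; letting $n\to\infty$ in $\mathcal{U} p_n(M)=p_n(M)\mathcal{U}$ then yields $\mathcal{U} M^{\frac{1}{2}} = M^{\frac{1}{2}}\mathcal{U}$, as claimed. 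The only genuinely delicate point is the real-coefficient requirement for antilinear $\mathcal{U}$; the rest is the telescoping of conjugated powers and routine continuity.
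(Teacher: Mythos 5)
Your proof is correct and takes essentially the same route as the paper's: commutation with powers of $M$, real-coefficient Weierstrass approximation of $\sqrt{\lambda}$ on $[0,\|M\|]$ to neutralize the coefficient conjugation in the antiunitary case, and passage to the limit. Your explicit limit argument (norm convergence of $p_n(M)$ plus the isometry property of $\mathcal{U}$) supplies precisely the detail the paper defers to its remark following the proof about antiunitary operators preserving distances.
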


\begin{proof}
We are assuming $\mathcal{U}\circ M = M \circ  \mathcal{U}$ and it immediately follows that 
\begin{align*}
\mathcal{U} \circ M \circ \cdots \circ M  &=  M \circ \mathcal{U} \circ  \cdots \circ M \\
         & \qquad \qquad \vdots \\
         & =   M \circ \cdots \circ M  \circ \mathcal{U}  
\end{align*}
which says $\mathcal{U} \circ M^n = M^n \circ \mathcal{U}$. For the case of unitary $\mathcal{U}$, everything proceeds as expected,
\begin{equation}
\mathcal{U} \circ \left( \sum a_n M^n \right)=  \left(\sum a_n M^n \right) \circ \mathcal{U}.
\end{equation}
Unfortunately, for antiunitary $\mathcal{U}$ we find
\begin{equation}
\mathcal{U} \circ \left( \sum a_n M^n \right)=  \left(\sum \overline{a_n} M^n \right) \circ \mathcal{U}.
\end{equation}
However, fortunately, we are able to select the $p_n$
to have only real coefficients, by the the original theorem of Weierstrass \cite{weierstrass1885analytische}, so we find  $\mathcal{U} \circ p_n(M) =  p_n(M) \circ \mathcal{U} $ for both the unitary and antiunitary cases.
Finally, both unitary and antiunitary operators are continuous, so we pass to the limit and are done
by Eq.~(\ref{eqn:root_by_limit_of_poly})
\end{proof}

To be fair, we are jumping over some complications when we say
we are passing to the limit.  What we need is at least 
point-wise convergence for
\begin{equation*}
  \lim_{n\rightarrow\infty}\left(\mathcal{U}\circ p_{n}(M)\right)=\mathcal{U}\circ\left(\lim_{n\rightarrow\infty}p_{n}(M)\right)  
\end{equation*}
and
\begin{equation*}
   \lim_{n\rightarrow\infty}\left(p_{n}(M)\circ\mathcal{U}\right)=\left(\lim_{n\rightarrow\infty}p_{n}(M)\right)\circ\mathcal{U}. 
\end{equation*}
In the unitary case these are standard results.  The antiunitary versions have simple, but not short, proofs.  At the core of these proofs is the fact that the antiunitary property
\begin{equation*}
    \left\langle \mathcal{U}(\boldsymbol{v}),\mathcal{U}(\boldsymbol{w})\right\rangle =\overline{\left\langle \boldsymbol{v},\boldsymbol{w}\right\rangle }
\end{equation*}
implies that $\mathcal{U}$ preserves distances:
\begin{equation*}
    \|\mathcal{U}(\boldsymbol{x})-\mathcal{U}(\boldsymbol{y})\|=\|\boldsymbol{x}-\boldsymbol{y}\|.
\end{equation*}

\begin{lem}
Suppose $\mathcal{U}:\mathbb{C}^n \rightarrow \mathbb{C}^n$ is a unitary or antiunitary operator. Suppose also that $M$ is a positive semidefinite $n$-by-$n$ matrix.  If $\mathcal{U}$ anti-commutes with $M$ then $\mathcal{U}$ anti-commutes with $M^\frac{1}{2} $.
\end{lem}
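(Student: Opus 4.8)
The plan is to mirror the proof of the preceding lemma, but to replace the generic real approximating polynomials by \emph{odd} real polynomials; this is exactly the ingredient that turns commutation of the even powers of $M$ into anticommutation of its square root.

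First I would iterate the hypothesis $\mathcal{U}\circ M=-M\circ\mathcal{U}$ to record how $\mathcal{U}$ passes through an arbitrary power of $M$. Each time $\mathcal{U}$ is moved past one factor of $M$ it produces a sign, so by induction one obtains $\mathcal{U}\circ M^{n}=(-1)^{n}M^{n}\circ\mathcal{U}$. In particular even powers commute with $\mathcal{U}$ while odd powers anticommute with it. This is the point at which the argument must diverge from the commuting case: a polynomial carrying both even and odd powers would be dragged through $\mathcal{U}$ into $p_{n}(-M)$, and the limit of $p_{n}(-M)$ is not controlled by the convergence $p_{n}(\lambda)\to\sqrt{\lambda}$ on $[0,\|M\|]$, since $-M$ has spectrum in $[-\|M\|,0]$.

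The key step is therefore to produce a sequence of \emph{odd} real polynomials $p_{n}$ with $p_{n}(\lambda)\to\sqrt{\lambda}$ uniformly on $[0,\|M\|]$. I would obtain these by symmetrizing the Weierstrass approximation already used above: extend $\sqrt{\lambda}$ to the odd continuous function $g(t)=\operatorname{sgn}(t)\sqrt{|t|}$ on $[-\|M\|,\|M\|]$, take real polynomials $P_{n}\to g$ uniformly (real coefficients are available by Weierstrass's original theorem, exactly as in the preceding lemma), and replace each $P_{n}$ by its odd part $p_{n}(t)=\tfrac{1}{2}\bigl(P_{n}(t)-P_{n}(-t)\bigr)$. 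Because $g$ is odd these odd parts still converge uniformly to $g$, hence $p_{n}(\lambda)\to\sqrt{\lambda}$ on $[0,\|M\|]$, and by the spectral theorem $p_{n}(M)\to M^{\frac{1}{2}}$. Since every $p_{n}$ contains only odd powers, the sign computation above collapses to a single minus sign, giving $\mathcal{U}\circ p_{n}(M)=-\,p_{n}(M)\circ\mathcal{U}$ in the unitary case and, using the reality of the coefficients to neutralize the complex conjugation, in the antiunitary case as well.

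Finally I would pass to the limit just as before, invoking continuity of $\mathcal{U}$ on each side to conclude $\mathcal{U}\circ M^{\frac{1}{2}}=-M^{\frac{1}{2}}\circ\mathcal{U}$; the pointwise-convergence technicalities flagged after the preceding lemma apply here verbatim. I expect the main obstacle to be the odd-polynomial construction rather than the algebra: one must verify both that an odd uniform approximant to $\sqrt{\lambda}$ exists (the symmetrization argument) and that it can be taken with real coefficients, since it is precisely the conjunction of \emph{odd} and \emph{real} that makes the lone minus sign survive the antiunitary case.
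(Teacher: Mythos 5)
Your proposal is correct and follows essentially the same route as the paper: both prove $\mathcal{U}\circ M^{n}=(-1)^{n}M^{n}\circ\mathcal{U}$ and then approximate the odd extension $\operatorname{sgn}(t)\sqrt{|t|}$ on $[-\|M\|,\|M\|]$ by real polynomials with only odd powers, so that the single minus sign survives in both the unitary and antiunitary cases before passing to the limit. The only difference is that you spell out the symmetrization $p_{n}(t)=\tfrac{1}{2}\bigl(P_{n}(t)-P_{n}(-t)\bigr)$ justifying why the even coefficients can be zeroed out, a detail the paper asserts without proof.
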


\begin{proof}
The proof is essentially as before, except that we can only prove that  $\mathcal{U} \circ M^n = - M^n \circ \mathcal{U}$ for odd values of $n$.  However, we can approximate the odd function
\begin{equation}
\lambda\mapsto\begin{cases}
\sqrt{\lambda} & \text{for }\lambda\geq0\\
-\sqrt{-\lambda} & \text{for }\lambda\leq0
\end{cases}
\end{equation}
on the larger interval  $[-\|M\|,\|M\|]$  by polynomials, and this means we can zero out all the coefficients in even positions.  The proof then
proceeds as before.
\end{proof}

\begin{lem}
Suppose $\mathcal{U}:\mathbb{C}^n \rightarrow \mathbb{C}^n$ is a unitary or antiunitary operator. Suppose also that $M$ is an invertible $n$-by-$n$ matrix.  If $\mathcal{U}$ commutes with $M$ then $\mathcal{U}$   commutes with $M^{-1} $.
\end{lem}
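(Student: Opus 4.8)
The plan is to dispense with polynomial approximation altogether. For a general invertible $M$ the spectrum is merely a finite set of nonzero complex numbers, so the Weierstrass route used in the preceding lemmas is awkward here, and for antiunitary $\mathcal{U}$ it would only reintroduce the complex-coefficient bookkeeping that we worked to avoid. Instead I would argue purely algebraically, starting from the hypothesis $\mathcal{U}\circ M = M\circ \mathcal{U}$ and exploiting the existence of the two-sided inverse $M^{-1}$.

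First I would compose this identity on the left with $M^{-1}$, using $M^{-1}\circ M = I$, to obtain $M^{-1}\circ \mathcal{U}\circ M = \mathcal{U}$. Then I would compose on the right with $M^{-1}$, using $M\circ M^{-1} = I$, which collapses the left-hand side to $M^{-1}\circ \mathcal{U}$ and the right-hand side to $\mathcal{U}\circ M^{-1}$. This immediately yields $\mathcal{U}\circ M^{-1} = M^{-1}\circ \mathcal{U}$, the desired commutation, with no limiting argument required.

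The point I would stress is that this manipulation is \emph{identical} in the unitary and antiunitary cases. The difficulty that appeared earlier arose only when $\mathcal{U}$ was pulled past a scalar coefficient $a_n$, producing $\overline{a_n}$; here no scalars are ever pulled through $\mathcal{U}$. We only compose the fixed linear operator $M^{-1}$ on either side, and operator composition is associative whether $\mathcal{U}$ is linear or conjugate-linear. Consequently there is essentially no obstacle to overcome: the one thing to verify is that $M^{-1}$ exists and acts as a two-sided inverse under composition, which is exactly the invertibility hypothesis. Were one instead to insist on the polynomial picture, writing $M^{-1}$ as a polynomial in $M$ via the Cayley--Hamilton theorem, one would then have to check that the relevant coefficients could be taken real for the antiunitary case — so the only place any real work could hide is precisely the subtlety that the direct argument sidesteps entirely.
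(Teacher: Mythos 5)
Your proof is correct and takes essentially the same approach as the paper: both arguments are purely algebraic, composing the commutation relation $\mathcal{U}\circ M = M\circ\mathcal{U}$ with $M^{-1}$ on the left and on the right and simplifying via $M^{-1}\circ M = M\circ M^{-1} = I$, with no polynomial approximation needed. Your observation that the manipulation never pulls a scalar through $\mathcal{U}$, and hence works identically in the unitary and antiunitary cases, is exactly why the paper's proof can treat both cases at once.
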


\begin{proof}
\begin{equation}
M\circ\mathcal{U}=\mathcal{U}\circ M\circ\mathcal{U}\circ M^{-1}
\end{equation}
\begin{equation}
M^{-1}\circ M\circ\mathcal{U}\circ M^{-1}=M^{-1}\circ\mathcal{U}\circ M\circ M^{-1}
\end{equation}
which simplifies to 
\begin{equation}
\mathcal{U}\circ M^{-1}=M^{-1}\circ\mathcal{U}
\end{equation}
\end{proof}

The proof is easily adapted to prove the following.

\begin{lem}
Suppose $\mathcal{U}:\mathbb{C}^n \rightarrow \mathbb{C}^n$ is a unitary or antiunitary operator. Suppose also that $M$ is an invertible $n$-by-$n$ matrix.  If $\mathcal{U}$ anti-commutes with $M$ then $\mathcal{U}$ anti-commutes with $M^{-1} $.
\end{lem}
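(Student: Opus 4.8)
The plan is to mimic the short algebraic manipulation used for the commuting inverse case, carrying a single sign through each step. I would start from the anti-commutation hypothesis $\mathcal{U}\circ M=-M\circ\mathcal{U}$ and isolate the conjugated operator $M^{-1}\circ\mathcal{U}\circ M$ by composing with $M^{-1}$ on the appropriate side, exactly as was done for the commuting case just above.

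Concretely, first I would left-compose the hypothesis with $M^{-1}$ to obtain $M^{-1}\circ\mathcal{U}\circ M=-\mathcal{U}$, using that $M^{-1}\circ M$ is the identity. Then I would right-compose with $M^{-1}$, which cancels the trailing $M$ and yields $M^{-1}\circ\mathcal{U}=-\mathcal{U}\circ M^{-1}$, equivalently $\mathcal{U}\circ M^{-1}=-M^{-1}\circ\mathcal{U}$, which is precisely the claimed anti-commutation for $M^{-1}$.

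The only point that deserves attention --- and the closest thing to an obstacle --- is the appearance of the scalar $-1$ when $\mathcal{U}$ is antiunitary. An antiunitary operator is conjugate-linear, so in general $\mathcal{U}(\alpha\boldsymbol{v})=\overline{\alpha}\,\mathcal{U}(\boldsymbol{v})$; however, the only scalar produced here is $-1$, which is fixed under complex conjugation, so it passes through $\mathcal{U}$ unchanged and the argument reads identically in the unitary and antiunitary cases. No limiting or polynomial-approximation machinery is needed, since inversion is an exact algebraic operation rather than a spectral-function construction, which is what makes the present lemma strictly easier than the square-root lemmas above.
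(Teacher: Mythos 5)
Your proof is correct and is exactly the adaptation the paper intends: it sandwiches the anti-commutation hypothesis $\mathcal{U}\circ M=-M\circ\mathcal{U}$ with $M^{-1}$ on both sides, just as the paper's proof of the commuting-inverse lemma does, with the sign carried through harmlessly. Your remark that the scalar $-1$ causes no trouble (being real, and in fact only ever passing through the linear map $M^{-1}$) correctly disposes of the one antiunitary subtlety.
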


\bibliography{pseudo_for_maxwell}